\documentclass[journal,12pt,onecolumn,draftcls]{IEEEtran}
\usepackage{float}
\usepackage{comment}
\usepackage[table]{xcolor}
\usepackage{cite}      
\usepackage{graphicx}  
\usepackage{psfrag}    
\usepackage{placeins}
\usepackage{lineno}
\usepackage{url}
\usepackage{subcaption}
\usepackage{caption}
\captionsetup[figure]{font=small}
\usepackage{times}
\usepackage{textcomp}
\usepackage{amsmath} 
\usepackage{amsfonts,amsthm,amssymb,mathtools}
\usepackage{balance}
\usepackage[normalem]{ulem}
\usepackage{xcolor}

\usepackage{cuted}
\setlength\stripsep{3pt plus 1pt minus 1pt}
\usepackage{amsmath,bm}
\usepackage{hyperref}
\usepackage{cleveref}

\DeclareMathOperator{\E}{\mathbb{E}}
\DeclareMathOperator{\Pro}{\prod_{c \in \Phi \cap \mathbf{A_c}(\kappa_m)}}
\DeclareMathOperator{\rcsm}{\sigma_{m_{avg}}}
\DeclareMathOperator{\rcsc}{\sigma_{c_{avg}}}
\DeclareMathOperator{\Eii}{\underset{\sigma_c,c}{\E}}
\DeclareMathOperator{\Ei}{\underset{\sigma_c}{\E}}
\DeclareMathOperator{\Su}{\sum_{c \in \Phi \cap \mathbf{A_c}(\kappa_m)}}

\DeclareMathOperator{\hxt}{\mathcal{H}(\kappa_m)}
\DeclareMathOperator{\hxc}{\mathcal{H}(\kappa_c)}

\DeclareMathOperator{\Pdc}{\mathcal{P}_{DC}^{Bi}}
\DeclareMathOperator{\PdcPhi}{\mathcal{P}_{DC\Phi}^{Bi}}
\DeclareMathOperator{\thetam}{\mathbf{\theta_m}}
\DeclareMathOperator{\sigmam}{\mathbf{\sigma_m}}
\DeclareMathOperator{\sigmac}{\mathbf{\sigma_c}}
\DeclareMathOperator{\thetac}{\mathbf{\theta_c}}
\DeclareMathOperator{\longvar}{\rcsm P_{tx}G_0B_0\epsilon H_0}
\DeclareMathOperator{\onetheta}{\Delta \theta_{tx}}

\newtheorem{theorem}{Theorem}[]
\newtheorem{corollary}{Corollary}[theorem]

\usepackage{multicol}
\usepackage{xr-hyper}

\begin{document}
\title{Optimization of Network Throughput of Joint Radar Communication System Using Stochastic Geometry}
\author{Shobha~Sundar~Ram,~\IEEEmembership{Senior Member,~IEEE} and Gourab~Ghatak~\IEEEmembership{Member,~IEEE}
\thanks{The authors are with Indraprastha Institute of Information Technology Delhi (email:  shobha@iiitd.ac.in, gourab.ghatak@).iiitd.ac.in}}
\maketitle
\begin{abstract}
Recently joint radar communication (JRC) systems have gained considerable interest for several applications such as vehicular communications, indoor localization and activity recognition, covert military communications, and satellite based remote sensing. In these frameworks, bistatic/passive radar deployments with directional beams explore the angular search space and identify mobile users/radar targets. Subsequently, directional communication links are established with these mobile users. Consequently, JRC parameters such as the time trade-off between the radar exploration and communication service tasks have direct implications on the network throughput. Using tools from stochastic geometry (SG), we derive several system design and planning insights for deploying such networks and demonstrate how efficient radar detection can augment the communication throughput in a JRC system. Specifically, we provide a generalized analytical framework to maximize the network throughput by optimizing JRC parameters such as the exploration/exploitation duty cycle, the radar bandwidth, the transmit power and the pulse repetition interval. The analysis is further extended to monostatic radar conditions, which is a special case in our framework. The theoretical results are experimentally validated through Monte Carlo simulations. Our analysis highlights that for a larger bistatic range, a lower operating bandwidth and a higher duty cycle must be employed to maximize the network throughput. Furthermore, we demonstrate how a reduced success in radar detection due to higher clutter density deteriorates the overall network throughput. Finally, we show a peak reliability of 70\% of the JRC link metrics for a single bistatic transceiver configuration. 
\end{abstract}
\providecommand{\keywords}[1]{\textbf{\emph{Keywords--}}#1}
\begin{IEEEkeywords}
joint radar communication, stochastic geometry, throughput, bistatic radar, passive radar, explore/exploit
\end{IEEEkeywords}

\IEEEpeerreviewmaketitle

\section{Introduction}
Over the last decade, joint radar communication (JRC) frameworks are being researched and developed for numerous applications at microwave and millimeter wave (mmWave) frequencies~\cite{liu2020joint}. Through the integration of sensing and communication functionalities on a common platform, JRC based connected systems offer the advantages of increased spectral efficiency through shared spectrum and reduced hardware costs. The most common applications are 
WiFi/WLAN based indoor detection of humans  \cite{falcone2012potentialities,storrer2021indoor,tan2016awireless,li2020passive,alloulah2019future,yildirim2021super}, radar enhanced vehicular communications \cite{ali2020passive,kumari2017ieee,dokhanchi2019mmwave,duggal2020doppler}, covert communications supported by radar based localization \cite{kellett2019random,hu2019optimal} and radar remote sensing based on global navigation satellite systems (GNSS) \cite{zavorotny2014tutorial}. All of these systems consist of a dual functional (radar-communication) transmitter and either a standalone or integrated radar/communications receiver. When the radar receiver is not co-located with the transmitter, the system constitutes a passive/bistatic radar framework. This is the most common scenario in sub-6GHz indoor localization systems where the WiFi access point is an opportunistic illuminator and humans activities are sensed for intrusion detection, surveillance, or assisted living. The bistatic scenario is also encountered in GNSS based remote sensing where the ground reflected satellite signals are analyzed, at a passive radar receiver, to estimate land and water surface properties \cite{zavorotny2014tutorial}. JRC based systems are also being researched for next generation intelligent transportation services where one of the main objectives is to share environment information for collision avoidance, and pedestrian detection eventually leading to autonomous driving. MmWave communication protocols such as IEEE 802.11ad/ay characterized by high bandwidths and low latency have been identified for vehicular-to-everything (V2X) communications \cite{nitsche2014ieee,zhou2018ieee}. However, due to the severe propagation loss at mmWave carrier frequencies, they are meant to operate in short range line-of-sight (LOS) conditions with highly directional beams realized through digital beamforming. In high mobility environments, beam training will result in considerable overhead and significant deterioration of latency. Hence, the integration of the radar functionality within the existing millimeter wave communication frameworks is being explored for rapid beam alignment \cite{kumari2017ieee,dokhanchi2019mmwave,duggal2019micro,grossi2021opportunistic}. The wide bandwidth supported by the mmWave signals along with the channel estimation capabilities within the packet preamble are uniquely suited for radar remote sensing operations. A preliminary work on the detection metrics of a bistatic radar was presented in \cite{ram2022Estimation}. In this work, we consider a generalized passive/bistatic radar framework that can be used to model the JRC application scenarios described above and analyze the communication network throughput performance as a function of radar detection metrics. The monostatic radar scenario is considered as a limiting case of the bistatic radar and the corresponding results are obtained as a corollary.

In all of the applications discussed above, the implementation of the dual functional systems involves the optimization of several hardware and software design parameters - such as antennas, transmit waveform and signal processing algorithms - for enhanced radar detection performance without deterioration in the communication metrics \cite{hassanien2016signaling,mishra2019toward,ma2021spatial}. 
In this work, we focus on the time resource management between the radar and communication functionalities for optimizing communication network throughput.
Prior works have tackled the time resource management for multi-functional radars \cite{miranda2007comparison}. In \cite{grossi2017two}, the radar dwell time was optimized for maximum target detection for a constant false alarm rate. In \cite{ghatak2021beamwidth}, the 
time resource management between the localization and communication functionalities was determined as a function of the density of base station deployment. During the radar/localization phase, the transmitter must scan the angular search space and determine the number and location of the mobile users. Then these users must be served during the remaining duration through directional/pencil beams. The exploration and service process must be repeated periodically due to the motion of the mobile user. Now, if the angular beamwidth of the search beams are very narrow, then they will take longer to cover the search space (for a fixed dwell time) and this will result in reduced communication service time. However, the radar link quality will be higher due to the improved gain and result in a larger number of targets being detected. Hence, the overall network throughput is a function of the explore/exploit time  management. In this paper, we use stochastic geometry (SG) based formulations to optimize the network throughput as a function of the explore/exploit duty cycle. 

SG tools were originally applied to communication problems in cellular networks, mmWave systems, and vehicular networks \cite{chiu2013stochastic,andrews2011tractable,bai2014coverage,thornburg2016performance,ghatak2018coverage}. In all of these scenarios, there is considerable variation in the strength and spatial distribution of the base stations.
More recently, they have been increasingly used in diverse radar scenarios to study the radar detection performance under interference and clutter conditions \cite{al2017stochastic,munari2018stochastic,ren2018performance,park2018analysis,fang2020stochastic}. These works have considered the significant diversity in the spatial distributions and density of radars. SG offers a mathematical framework to analyze performance metrics of 
spatial stochastic processes that approximate to Poisson point process distributions without the requirement of computationally expensive system simulation studies or laborious field measurements. Based on the mathematical analysis, insights are obtained of the impact of design parameters on system level performances. In our problem related to JRC, there can be considerable variation in the position of the dual functional base station transmitter, the radar receiver and the communication end users who are the primary radar targets. Additionally, the JRC will encounter reflections from undesired targets/clutter in the environment. We model the discrete clutter scatterers in the bistatic radar environment as a homogeneous Poisson point process (PPP) similar to \cite{chen2012integrated,ram2020estimating,ram2021optimization}. This generalized framework allows us to regard each specific JRC deployment, not as an individual case, but as a specific instance of an overall spatial stochastic process. Further, the target parameters such as the position and radar cross-section are also modelled as random variables. Using SG we quantify the mean number of mobile users that can be detected by the radar provided the statistics of the target and clutter conditions are known and subsequently determine the network throughput. Then we use the theorem to optimize system parameters such as the explore/exploit duty cycle, transmitted power, radar bandwidth and pulse repetition interval for maximum network throughput. Our results are validated through Monte Carlo simulations carried out in the short range bistatic radar framework.

Our paper is organized as follows. In the following section, we present the system model of the JRC with the bistatic radar framework and describe the explore/exploit time management scheme. In section \ref{sec:Theory}, we provide the theorem for deriving the network throughput as a function of the bistatic radar parameters. In section \ref{sec:Results}, we offer the key system parameter insights that are drawn from the theorem as well as the Monte Carlo simulation based experimental validation. Finally, we conclude the paper with a discussion on the strengths and limitations of the proposed analytical framework. 

\emph{Notation:} In this paper, all the random variables are indicated with bold font and constants and realizations of a random variable are indicated with regular font. 
\section{System Model}
\label{sec:SystemModel}
We consider a joint radar-communication (JRC) framework with a single base station (BS), multiple mobile users (MU) and a single passive radar receiver (RX) as shown in Fig.1a. The BS serves as a dual functional transmitter that supports both radar and communication functionalities in a time division manner as shown in Fig.1b. 
\begin{figure*}
    \centering
    \includegraphics[width=6in, height = 3in]{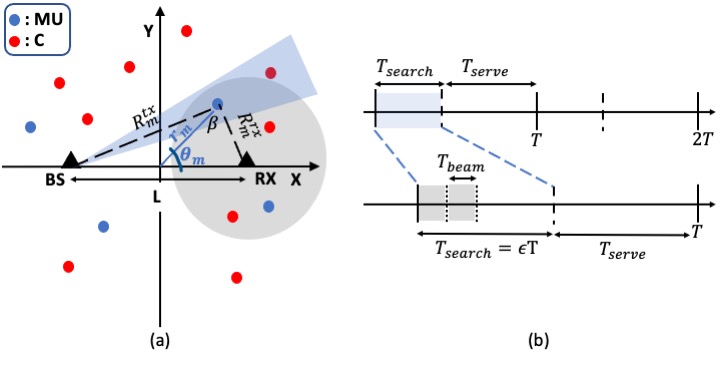}
    \caption{(a) Illustration of the joint radar-communication (JRC) scenario. The base station (BS) at $(\frac{L}{2},0)$ and indicated by a triangle is a dual functional transmitter that supports both radar and comm. functionalities with a directional and reconfigurable antenna system of $\onetheta$ beamwidth. An omnidirectional receiver (RX) at $(-\frac{L}{2},0)$ forms the bistatic/passive radar receiver. The channel consists of mobile users (MU) at $(r_m,\theta_m)$ at distances, $R^{tx}_m$ and $R^{rx}_m$, from BS and MU respectively indicated by blue dots; and undesirable clutter scatterers indicated by red dots. The bistatic radar angle is $\beta$. (b) Timing diagram of the JRC framework where each $T$ consists of $T_{search}=\epsilon T$ when the BS scans the angular search space for MU using $n_{beam}$ of $T_{beam}$ duration. During the remaining $T_{serve}$ duration, directional beam links are established between BS and MU based on the localization by the radar during $T_{search}$.}
    \label{fig:my_label}
\end{figure*}
During the $T_{search}$ interval, the BS serves as the radar transmitter or opportunistic illuminator and along with the RX, forms a bistatic radar whose objective is to localize the multiple MU in the presence of clutter/undesirable targets. During this interval, the BS transmits a uniform pulse stream of $\tau$ pulse width and $T_{PRI}$ pulse repetition interval, through a directional and reconfigurable antenna system with gain $G_{tx}$ and beamwidth $\onetheta$. The radar must scan the entire angular search space within $T_{search}$ to find the maximum number of MU. If the duration of an antenna beam is fixed at $T_{beam}$ (based on hardware parameters such as circuit switching speed for electronic scanning or Doppler frequency resolution requirements), then the number of beams that can be searched within $T_{search}$ is given by \par\noindent\small
\begin{align}
\label{eq:nbeam}
  n_{beam} = \frac{\Omega}{\onetheta} = \frac{T_{search}}{T_{beam}},  
\end{align}
where $\Omega$ is the angular search space. In our problem formulation, we set $\Omega = 2\pi$ to correspond to the entire azimuth angle extent. During the remaining duration of $T_{serve}$, directional communication links are assumed to be established between the BS and the detected MUs. \emph{Thus the beam alignment for communication during $T_{serve}$ is based on radar enabled localization during $T_{search}$.} 
Since the position of the MU does not remain fixed with time, the process of beam alignment is repeated for every $T = T_{search}+T_{serve}$ as shown in the figure. An important tuning parameter in the above JRC framework is the duty cycle $\epsilon = \frac{T_{search}}{T}$. From \eqref{eq:nbeam}, it is evident that $\onetheta = \frac{\Omega T}{\epsilon T_{search}} = \frac{1}{B_0\epsilon}$. Note that when the beams become broader, the gain of the radar links become poorer. As a result of the deterioration in the radar link metrics due to larger $\onetheta$, the detection performance becomes poorer and fewer MU ($\eta$) are likely to be detected in the search space. Thus $\eta$ is directly proportional to $\epsilon$. On the other hand, the network throughput ($\Upsilon$) of the system is defined as \par\noindent\small
\begin{align}
\label{eq:throughput}
\Upsilon = \eta(\epsilon)(1-\epsilon)D,
\end{align}
where $(1-\epsilon)$ is the duty cycle of the communication service time ($\frac{T_{search}}{T}$). Here, we assume that the communication resources such as spectrum are available to all the $\eta$ detected MU and all the MU are characterized by identical $D$.  
The objective of our work is to present a theorem to optimize the duty cycle $\epsilon$ for maximum $\Upsilon$ under the assumption that the noise, MU and clutter statistics are known and fixed during the radar processing time. These conditions are generally met for microwave or millimeter-wave systems \cite{billingsley2002low,ruoskanen2003millimeter}. The theoretical framework is derived for a generalized bistatic JRC framework where inferences for monostatic conditions are derived from limiting conditions.

Next, we discuss the bistatic radar geometry that we have considered based on the north-referenced system described in \cite{jackson86}. We assume that the BS is located in the Cartesian coordinates $(-\frac{L}{2},0)$ while the passive receiver, RX, is assumed to be omnidirectional and located at $(+\frac{L}{2},0)$. 
This is the most common framework in many modern passive radar deployments \cite{willis2005bistatic,davis2007advances,falcone2012potentialities}. High gain transmission links from the BS support high quality communication link metrics. The gain of  the passive RX antenna is intentionally kept low so that the common search space of the bistatic radar transmitter and receiver does not become too narrow which would then have to be supported by very time consuming and complicated beam scanning operations.  
The baseline length between the bistatic radar transmitter and receiver is $L$. The two-dimensional space is assumed to be populated by multiple scatterers - some MU ($m$) and the remaining discrete clutter ($c$) scatterers. In real world conditions, there can be significant variation in the number and spatial distribution of the point scatterers (both MU and clutter) in the radar channel. Further, the positions of scatterers are independent of each other. Consequently, we consider the distribution of both types of scatterers as independent Poisson point processes (PPP) - wherein each instance is assumed to be a realization of a spatial stochastic process. The number of the scatterers in each realization follows a Poisson distribution and the positions of these scatterers follow a uniform distribution. Some prior works where discrete scatterers have been modelled as a PPP are  \cite{chen2012integrated,ram2020estimating,ram2021optimization}. We assume that that the mean spatial densities of the MU and clutter scatterers are $\rho_m$ and $\rho_c$ respectively where $\rho_m << \rho_c$. 
The position of an MU/clutter scatterer is specified in polar coordinates $(r_i,\theta_i), i\in m,c$ where $r_i$ is the distance from the origin and $\theta_i$ is the angle with the positive $X$ axis.  The distance from BS and RX  are $R^{tx}_i$ and $R^{rx}_i$ respectively and the bistatic range ($\kappa_i$) is specified by the geometric mean of both the one-way propagation distances ($\kappa_i = \sqrt{R^{tx}_iR^{rx}_i}$). 
In bistatic radar geometry, the contours of constant $\kappa_i$ for a fixed $L$ are called  Cassini ovals \cite{willis2005bistatic}. Two regions are identified: the first is the \emph{cosite} region when $L \leq 2\kappa_i$ and the contours appear as concentric ovals for different $\kappa_i$; and the second is when $L > 2\kappa_i$ and the oval splits into two circles centered around BS and RX. In our work, we assume that cosite conditions prevail and that the bistatic angle at MU is $\beta$. Note that when $L$ is zero, $\beta = 0$ and the system becomes a monostatic radar scenario. Here, the Cassini ovals become concentric circles for different values of $R^{tx}_i = R^{rx}_i =\kappa_i$. 

In \cite{ram2022Estimation}, we presented a metric called the radar detection coverage probability ($\Pdc$) to indicate the likelihood of a radar target being detected by a bistatic radar based on the signal-to-clutter-and-noise ratio (SCNR). The metric is analogous to wireless detection coverage probability which is widely studied in communication systems to study the network coverage in wireless links \cite{andrews2011tractable}. The metric is preferred to other more conventional radar metrics such as probability of detection and probability of false alarm since it offers physics based insights into system performance and because of its tractable problem formulation. In this work, we use this metric to estimate the mean number of detected MU ($\eta$) as a function of $\epsilon$ and optimize the network throughput ($\Upsilon$). If the transmitted power from BS is $P_{tx}$ and the bistatic radar cross-section (RCS) of the MU, $\sigma_m$, is a random variable, then the received signal at RX, $S$, is given by the Friis radar range equation as \par\noindent\small
\begin{align}
\label{eq:TgtSignal}
    \textbf{S}(\kappa_m) = P_{tx}G_{tx}(\thetam)\mathbf{\sigma_m}\hxt,
\end{align}
where $\hxt$ is the two-way propagation factor. In line-of-sight (LOS) conditions this is 
\par\noindent\small
\begin{align}
\label{eq:hxt}
    \hxt = \frac{\lambda^2}{(4\pi)^3 (R^{tx}R^{rx})^2}= \frac{H_0}{\kappa_m^{4}},
\end{align}
where $\lambda$ is the wavelength of the radar. In the above expression, the gain of RX is assumed to be 1 since it is an omnidirectional antenna. We assume that the gain of the BS is uniform within the main lobe and is inversely proportional to the beam width: $G_{tx}=\frac{G_0}{\onetheta}$ where $G_0$ is the constant of proportionality that accounts for antenna inefficiencies including impedance mismatch, dielectric and conductor efficiencies. 
If we assume that the MU is within the mainlobe of the radar, then using \eqref{eq:nbeam}, equation \eqref{eq:TgtSignal} can be written as \par\noindent\small  
\begin{align}
\label{eq:TgtSignal2}
    \textbf{S}(\kappa_m) =  \frac{P_{tx}G_0\sigmam\hxt}{\onetheta} = P_{tx}G_0B_0\epsilon\sigmam\hxt.
\end{align}
In \eqref{eq:TgtSignal} and \eqref{eq:TgtSignal2}, we have assumed that only a single MU is within a radar resolution cell, $\mathbf{A_c}$, since multiple MUs are indistinguishable to the radar if they fall within the same cell for identical $\kappa_m$ and $\thetam$. Further, we assume that the $\sigmam$ follows the Swerling based distribution which models a radar target as an extended target with multiple scattering centers within a single resolution cell. The clutter returns, $\mathbf{C}$, at the radar receiver is given by \par\noindent\small
\begin{align}
\label{eq:CluttSignal}
    \mathbf{C}(\kappa_m) = \Su P_{tx}G_{tx}(\thetac)\mathbf{\sigma_c}\hxc.
\end{align}
In the above expression, we specifically only consider those clutter scatterers that fall within the same resolution cell, $\mathbf{A_c}$, as the MU. We use the generalized Weibull model \cite{sekine1990weibull} to describe the distribution of the RCS ($\sigmac$) of the clutter points. For a given noise of the radar receiver, $N_s = K_BT_sBW$ where $K_B, T_s$ and $BW$ are the Boltzmann constant, system noise temperature and bandwidth respectively, the signal to clutter and noise ratio is given by $\mathbf{SCNR}(\kappa_m) = \frac{\mathbf{S}(\kappa_m)}{\mathbf{C}(\kappa_m)+N_s}.$
\section{Estimation of Network Throughput of JRC}
\label{sec:Theory}
In this section we present the analytical framework to estimate the network throughput of the communication framework as a function of the explore/exploit duty cycle ($\epsilon$). We use the $\Pdc$ metric defined in \cite{ram2022Estimation} to estimate, $\eta$, the number of MU detected by the radar during the search interval $T_{search}=\epsilon T$ that will be subsequently served during $T_{serve}$.
\begin{theorem}
The network throughput ($\Upsilon$) for an explore/exploit duty cycle ($\epsilon$) for a bistatic/passive radar based JRC system is given by 
\begin{align}
\label{eq:theorem_part1}
\Upsilon = \Pdc \left( 2\pi \kappa_m - \frac{3 \pi L^2}{8\kappa_m}\right)\frac{\rho_m c\tau}{2\sqrt{1-\frac{L^2}{4\kappa_m^2}}}(1-\epsilon)D
\end{align}
where
\begin{align}
\label{eq:theorem_part2}
\Pdc =exp \left( \frac{-\gamma N_s \kappa_m^4}{\longvar}+ \frac{-\gamma\rho_c c\tau\kappa_m^2\rcsc}{B_0\epsilon(\kappa_m+\sqrt{\kappa_m^2-L^2})(\rcsm + \gamma\rcsc)}\right)
\end{align}
\end{theorem}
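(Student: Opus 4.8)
The plan is to derive the two displayed expressions separately: first the throughput $\Upsilon$ in terms of the radar detection coverage probability $\Pdc$ and the expected number of detected mobile users $\eta$, and second the closed form for $\Pdc$ itself. Starting from \eqref{eq:throughput}, we have $\Upsilon = \eta(\epsilon)(1-\epsilon)D$, so the task reduces to computing $\eta(\epsilon)$, the mean number of MU that the bistatic radar detects in the search space. I would invoke Campbell's theorem (or the simpler first-moment property of a PPP) for the MU process of intensity $\rho_m$: the expected number of detected MU equals $\rho_m$ times the integral of $\Pdc$ over the detection region, where the detection region is the annular Cassini-oval shell of bistatic range width set by the range resolution $\Delta\kappa = c\tau/2$. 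The geometric factor $\left(2\pi\kappa_m - \tfrac{3\pi L^2}{8\kappa_m}\right)\big/\big(2\sqrt{1 - L^2/(4\kappa_m^2)}\big)$ is precisely the area (in the plane) of such a Cassini-oval shell of bistatic radius $\kappa_m$ and thickness $c\tau/2$; I would obtain this by parametrizing the oval, using $R^{tx}R^{rx} = \kappa_m^2$ and the law of cosines $L^2 = (R^{tx})^2 + (R^{rx})^2 - 2R^{tx}R^{rx}\cos\beta$ to get the Jacobian $\partial r_m/\partial \kappa_m$, and integrating over $\theta_m$ (or $\beta$) around the oval, keeping terms to first order in the small resolution width. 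Multiplying this area by $\rho_m$ and by $\Pdc$ (which, under the assumption that the clutter and noise statistics are homogeneous over the relevant cell, is approximately constant over the thin shell and can be pulled out of the integral) yields \eqref{eq:theorem_part1}.

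For \eqref{eq:theorem_part2}, I would start from the definition $\Pdc = \Pr[\mathbf{SCNR}(\kappa_m) \ge \gamma]$ with $\mathbf{SCNR} = \mathbf{S}/(\mathbf{C} + N_s)$, using the signal model \eqref{eq:TgtSignal2} and the clutter model \eqref{eq:CluttSignal}. Conditioning on the clutter configuration, the event becomes $\sigmam \ge \tfrac{\gamma(\mathbf{C} + N_s)}{P_{tx}G_0 B_0 \epsilon H_0 \kappa_m^{-4}}$; assuming the Swerling model gives $\sigmam$ an exponential distribution (Swerling I/II, mean $\rcsm$), the conditional probability is $\exp\!\big(-\tfrac{\gamma(\mathbf{C}+N_s)\kappa_m^4}{P_{tx}G_0 B_0\epsilon H_0\,\rcsm}\big)$. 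Taking the expectation over the noise-only term factors out the first exponential $\exp\!\big(-\tfrac{\gamma N_s \kappa_m^4}{\longvar}\big)$ (note $\longvar = \rcsm P_{tx}G_0 B_0 \epsilon H_0$). The remaining factor is the Laplace-transform-type expectation $\E\big[\exp(-s\,\mathbf{C})\big]$ where the clutter sum $\mathbf{C}$ is over the PPP of clutter scatterers falling in the resolution cell $\mathbf{A_c}$; applying the probability generating functional of the PPP turns this into $\exp\!\big(-\rho_c \int_{\mathbf{A_c}} (1 - \E_{\sigma_c}[e^{-s\,l(r_c)\,\sigma_c}])\,\mathrm{d}A\big)$. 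Evaluating the RCS expectation (here the Weibull clutter RCS is handled so that the inner expectation produces the rational factor $\tfrac{\gamma\rcsc}{\rcsm + \gamma\rcsc}$ appearing in the theorem — this is where a specific moment/approximation of the Weibull model enters), and computing the cell area integral $\int_{\mathbf{A_c}}\mathrm{d}A$ over the resolution cell (whose extent in bistatic range is again $c\tau/2$ and whose geometric size contributes the $\kappa_m^2/[(\kappa_m + \sqrt{\kappa_m^2 - L^2})]$ and $1/(B_0\epsilon)$ factors), gives the second term in the exponent of \eqref{eq:theorem_part2}.

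I expect the main obstacle to be the geometry bookkeeping: carefully deriving the Jacobian relating the physical planar area element to the bistatic-range/angle coordinates on a Cassini oval, and correctly extracting the leading-order area of the thin Cassini shell so that the factors $\big(2\pi\kappa_m - \tfrac{3\pi L^2}{8\kappa_m}\big)$, $\sqrt{1 - L^2/(4\kappa_m^2)}$, and $(\kappa_m + \sqrt{\kappa_m^2 - L^2})$ come out with the right constants — these are easy to get wrong by $O(L^2/\kappa_m^2)$ terms. The probabilistic steps (exponential CDF for the Swerling target, PGFL of the clutter PPP, and the resulting product of two exponentials) are essentially the standard SCNR-coverage computation and should go through cleanly; the only subtlety there is justifying that $\Pdc$ is (approximately) constant over the detection shell so it can be factored out of the Campbell integral, which follows from the assumption that target and clutter statistics are fixed over the radar processing cell. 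Finally I would check the monostatic limit $L \to 0$ (equivalently $\beta \to 0$, $\kappa_m = R^{tx} = R^{rx}$) as a consistency test: \eqref{eq:theorem_part1} should reduce to $\Pdc \cdot 2\pi\kappa_m \cdot \rho_m c\tau/2 \cdot (1-\epsilon) D$ and \eqref{eq:theorem_part2} to the corresponding monostatic coverage, matching the corollary the authors announce.
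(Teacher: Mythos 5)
Your proposal follows essentially the same route as the paper: the Swerling-1 (exponential) target RCS turns the SCNR event into a product of a noise exponential and a clutter Laplace functional, the PPP probability generating functional evaluated over the bistatic range resolution cell gives the clutter term, and $\eta = \Pdc\,\rho_m \times (\text{Cassini-oval circumference}) \times (\text{range resolution})$ gives the throughput via $\Upsilon=\eta(1-\epsilon)D$. The two steps you leave as placeholders are precisely the paper's two key simplifications — the path-loss approximation $\hxc \approx \hxt$ for clutter inside the cell (which makes the PGFL integrand location-independent so the cell area $A_c$ factors out) and the $\alpha=1$ (exponential) special case of the Weibull clutter RCS, which is what produces the rational factor $\gamma\rcsc/(\rcsm+\gamma\rcsc)$ — so your sketch matches the paper's proof once those are filled in.
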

\begin{proof}
For an MU at bistatic range $\kappa_m$, the $\mathbf{SCNR}$ is a function of several random variables such as the MU cross-section, the position of MU, the number and spatial distribution of the discrete clutter scatterers and their RCS as shown below - \par\noindent\small
\begin{align}
\label{eq:SCNR1}
\mathbf{SCNR}(\kappa_m) 
= \frac{P_{tx}G_0B_0\epsilon\mathbf{\sigma_m}\hxt}{\Su P_{tx}G_0B_0\epsilon\mathbf{\sigma_c}\hxc+N_s} = \frac{\mathbf{\sigma_m}}{\Su\frac{\mathbf{\sigma_c}\hxc}{\hxt}+\frac{N_s}{P_{tx}G_0B_0\epsilon\hxt}}.
\end{align}
We define the bistatic radar detection coverage probability ($\Pdc$) as the probability that the SCNR is above a predefined threshold, $\gamma$. Therefore, \par\noindent\small
\begin{align}
\label{eq:Pdc1}
\Pdc = \mathcal{P}(\mathbf{SCNR}(\kappa_m)\geq \gamma)=\mathcal{P}\left(\mathbf{\sigma_m}\geq \Su\frac{\gamma\mathbf{\sigma_c}\kappa_m^4}{\kappa_c^4}+\frac{\gamma N_s \kappa_m^4}{P_{tx}G_0B_0\epsilon H_0}\right).
\end{align}
The bistatic RCS, $\sigmam$, has been shown to demonstrate similar statistics as monostatic RCS \cite{skolnik1961analysis}. In this work, we consider the MU to have Swerling-1 characteristics, which corresponds to mobile users such as vehicles and humans \cite{raynal2011radar,Raynal2011RCS}, as shown below \par\noindent\small
\begin{align}
\label{eq:TargetRCS}
  \mathcal{P}(\sigma_m) = \frac{1}{\rcsm}exp \left(\frac{-\sigma_m}{\rcsm} \right),
\end{align}
where, $\rcsm$ is the average radar cross-section.
Hence, \eqref{eq:Pdc1} can be expanded to \par\noindent\small
\begin{align}
\label{eq:Pdc2}
\Pdc =exp \left(\Su\frac{-\gamma\mathbf{\sigma_c}}{\rcsm}-\frac{\gamma N_s\kappa_m^4}{\longvar} \right) =exp \left( \frac{-\gamma N_s \kappa_m^4}{\longvar}\right) I(\kappa_m).
\end{align}
In the above expression, $\Pdc$ consists of two terms. The first term consists entirely of constants and demonstrates the radar detection performance as a function of the signal-to-noise ratio (SNR). The second term, $I(\kappa_m)$, shows the effect of the signal-to-clutter ratio (SCR). Since, we are specifically considering the clutter points that fall within the same resolution cell, $\mathbf{A_c}$, as the MU we can assume that $\hxc \approx \hxt$ in \eqref{eq:Pdc1}. We provide further insights into this path loss approximation in our later sections. Finally, the exponent of sum of terms can be written as a product of exponents. Hence, $I(\kappa_m)$ is \par\noindent\small
\begin{align}
\label{eq:I1}
I(\kappa_m) = \Eii\left[\Pro exp \left(\frac{-\gamma\mathbf{\sigma_c}}{\rcsm} \right)\right].
\end{align}
The probability generating functional (PGFL) of a homogeneous PPP \cite{haenggi2012stochastic} based on stochastic geometry formulations is given as \par\noindent\small
\begin{align}
\label{eq:I2}
I =exp\left(-\Eii\left[\iint_{\mathbf{r_c,\phi_c}} \rho_c\left(1- exp\left(\frac{-\gamma\mathbf{\sigma_c}}{\rcsm}\right)\right) d(\vec{x}_c) \right]\right),
\end{align}
where $\rho_c$ is the mean spatial density of the clutter scatterers. The integral specifically considers the clutter scatterers that fall within the same resolution cell as the MU. Bistatic radar literature identifies three types of resolution cells - the range resolution cell, the beamwidth resolution cell and the Doppler resolution cell. In our study, the main objective of the radar is to perform range-azimuth based localization. Hence, we consider the range resolution cell, which based on \cite{willis2005bistatic}, corresponds to \par\noindent\small
\begin{align}
\label{eq:RangeResCell1}
    \mathbf{A_{c}}(\kappa_m) = \frac{c\tau R^{tx}(\thetam) \Delta \theta_{tx}}{2\cos^2 (\beta(\thetam)/2)} = \frac{c\tau R^{tx}(\thetam)}{B_0\epsilon(1+\sqrt{1-\sin^2\beta(\thetam)})},
\end{align} 
for a pulse width of $\tau$. In the above expression, note that the size of $\mathbf{A_c}$ varies as a function of constant $\kappa_m$ and the random variable $\thetam$. Prior studies show that $\sin\beta$ takes on the value of $\sin\beta_{max}$ with a very high probability when $R^{tx}_m \approx \kappa_m$ \cite{ram2022Estimation}. Based on bistatic geometry $\sin\beta_{max}=\sqrt{\frac{L^2}{\kappa_m^2}-\frac{L^4}{\kappa_m^4}}\approx \frac{L}{\kappa_m}$ when $\kappa_m>L$. Therefore, \eqref{eq:RangeResCell1} reduces to
\begin{align}
\label{eq:RangeResCell2}
    A_{c} \approx \frac{c\tau \kappa_m^2}{B_0\epsilon(\kappa_m+\sqrt{\kappa_m^2-L^2})}
\end{align}
If we assume that the clutter statistics are uniform within $A_c$, then the integral in \eqref{eq:I2} can be further reduced to
\par\noindent\small
\begin{align}
\label{eq:I3}
I=exp\left(-\Ei\left[\left(1- exp\left(\frac{-\gamma\mathbf{\sigma_c}}{\rcsm}\right)\right) \rho_c A_c\right]\right)
= exp\left(-\Ei\left[\left(1- exp\left(\frac{-\gamma\mathbf{\sigma_c}}{\rcsm}\right)\right) \frac{\rho_c c\tau\kappa_m^2}{B_0\epsilon(\kappa_m+\sqrt{\kappa_m^2-L^2})}\right]\right)
\end{align}
If we define $J(\kappa_m) = \frac{\rho_c c\tau\kappa_m^2}{B_0\epsilon(\kappa_m+\sqrt{\kappa_m^2-L^2})}$ as a constant independent of $\sigma_c$, then it can be pulled out of the integral for computing the expectation as shown below - \par\noindent\small
\begin{align}
\label{eq:I4}
I(\kappa_m)
=exp\left(-J(\kappa_m)\int_0^{\infty}\left(1- exp\left(\frac{-\gamma\mathbf{\sigma_c}}{\rcsm}\right)\right)\mathcal{P}(\sigma_c)d\sigma_c \right).
\end{align}
In our work, we specifically consider the contributions from discrete/point clutter responses that arise from direct and multipath reflections from the surrounding environment. We model the radar cross-section of these scatterers using the generalized Weibull model shown in \par\noindent\small
\begin{align}
\label{eq:ClutterRCS}
 \mathcal{P}(\sigma_c) = \frac{\alpha}{\rcsc}\left(\frac{\sigma_c}{\rcsc}\right)^{\alpha-1}\exp\left(-\left(\frac{\sigma_c}{\rcsc}\right)^{\alpha} \right),
\end{align}
where $\rcsc$ is the average bistatic radar cross-section and $\alpha$ is the corresponding shape parameter. The Weibull distribution has been widely used to model clutter due to its tractable formulation and its adaptability to different environment conditions \cite{sekine1990weibull}.
When the scenario is characterized by few dominant scatterers, $\alpha$ is near one and corresponds to the exponential distribution. On the other hand, when there are multiple scatterers of similar strengths, then $\alpha$ tends to two which corresponds to the Rayleigh distribution. The actual value of $\alpha$ in any real world scenario is determined through empirical studies. $I(\kappa_m)$ in \eqref{eq:I4} can be numerically evaluated for any value of $\alpha$. But for $\alpha=1$, the expression becomes
\begin{align}
\label{eq:I5}
I(\kappa_m) = exp\left(-\frac{\gamma J(\kappa_m) \rcsc}{\rcsm + \gamma\rcsc}\right).
\end{align}
Substituting \eqref{eq:I5} in \eqref{eq:Pdc2}, we obtain
\begin{align}
\label{eq:Pdc3}
\Pdc =exp \left( \frac{-\gamma N_s \kappa_m^4}{\longvar}+ \frac{-\gamma\rho_c c\tau\kappa_m^2\rcsc}{B_0\epsilon(\kappa_m+\sqrt{\kappa_m^2-L^2})(\rcsm + \gamma\rcsc)}\right) .
\end{align}
The above expression shows the probability that a MU at $\kappa_m$ is detected by the bistatic radar based on its SCNR. If we assume a uniform spatial distribution, $\rho_m$, of the MU in Cartesian space, then the mean number of MU that can be detected within the total radar field-of-view at $\kappa_m$ bistatic range from the radar will be given by
\begin{align}
\label{eq:eta2}
    \eta = \Pdc(\kappa_m)\rho_m \mathcal{C}(\kappa_m) \delta r,
\end{align}
where $\mathcal{C}(\kappa_m)$ is the circumference of a Cassini oval and $\delta r = \frac{c \tau}{2 \cos (\beta/2)}$ is the range resolution of the radar.
The parametric equation for the Cassini oval is given in
\begin{align}
\label{eq:polarCoord}
\left(r_m^2 + \frac{L^2}{4}\right)^2 -r_m^2L^2\cos^2\theta_m = \kappa_m^4.
\end{align}
Hence, the circumference $\mathcal{C}(\kappa_m)$ can be computed from
\begin{align}
\label{eq:CircumferenceCassiniOval}
    \mathcal{C}(\kappa_m) = \int_0^{2\pi} r_m(\theta_m) d\theta_m =  \frac{L}{2}\int_0^{2\pi} \left[\cos 2\theta_m \pm \left(\frac{16\kappa_m^4}{L^4} -\sin^2\theta_m\right)^{1/2} \right]^{1/2}d\theta _m\approx
    2\pi \kappa_m - \frac{3 \pi L^2}{8\kappa_m}.
\end{align}
When $\kappa_m >L$, the estimation of \eqref{eq:CircumferenceCassiniOval} can be approximated to the expression shown above. Note that for very large values of $\kappa_m>>L$, the scenario approaches monostatic conditions. Here, the oval approximates to a circle of circumference $2\pi\kappa_m$. Also, as mentioned before $\beta$ can be approximated to $\beta_{max}$. Hence $\cos(\beta_{max}/2) \approx \sqrt{1-\frac{L}{4\kappa_m^2}}$. Therefore, the mean number of detected MU is 
\begin{align}
\eta = \Pdc \left( 2\pi \kappa_m - \frac{3 \pi L^2}{8\kappa_m}\right)\frac{\rho_m c\tau }{2\sqrt{1-\frac{L^2}{4\kappa_m^2}}},
\end{align}
and the resulting network throughput for the communication links that are set up with detected MUs is
\begin{align}
\label{eq:Throughput2}
\Upsilon = \Pdc \left( 2\pi \kappa_m - \frac{3 \pi L^2}{8\kappa_m}\right)\frac{\rho_m c\tau}{2\sqrt{1-\frac{L^2}{4\kappa_m^2}}}(1-\epsilon)D.
\end{align}
\end{proof}
\section{Optimization JRC System Parameters for Maximization of Network Throughput}
\label{sec:Results}
In this section, we discuss the corollaries from the theorem presented in the previous section. Based on these inferences, we present how JRC parameters such as $\epsilon, \tau, \onetheta$ and $T_{PRI}$ can be optimized for maximum throughput. 
The results presented in this section are experimentally validated using Monte Carlo simulations. For the simulations, we assume that the bistatic radar transmitter (BS) and receiver (RX) are located at $(\pm\frac{L}{2},0)$ respectively as shown in Fig.\ref{fig:MonteCarloSetup}.
\begin{figure*}
    \centering
    \includegraphics[width=\linewidth]{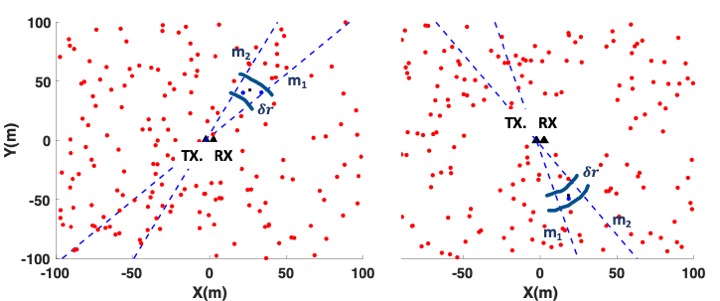}
    \caption{Two realizations of Monte Carlo simulations with bistatic radar transmitter (BS) and receiver (RX) indicated by triangles. The BS is characterized by narrow beam indicated by dashed blue lines with slopes $m_1$ and $m_2$ while RX is omnidirectional. Target is indicated by black dot while clutter scatterers \emph{inside} and \emph{outside} the radar resolution cell are indicated by blue and red dots respectively. }
    \label{fig:MonteCarloSetup}
\end{figure*}
We consider a $[200m \times 200m]$ region of interest.
Radar, MU and clutter parameters such as $P_{tx},L,\onetheta,N_s.\rcsm,\kappa,\rcsc$ and $\rho_c$ are kept fixed and summarized in Table.\ref{tab:my_params}. In each realization of the Monte Carlo simulation, the MU's polar coordinate position, $\theta_m$ is drawn from a uniform distribution from $[0,2\pi)$ and $r_m$ is computed for a fixed $\kappa_m$. The RCS of the MU is drawn from the exponential distribution corresponding to the Swerling-1 model. The mean number of  discrete clutter scatterers is equal to $\rho_c$ times the area of the region of interest. The number of clutter scatterers are different for each realization and drawn from a Poisson distribution. The positions of the clutter scatterers are based on a uniform distribution in the two-dimensional Cartesian space while the RCS of each discrete scatterer is drawn from the Weibull model. We compute the SCNR  based on the returns from the MU and the clutter scatterers estimated with the Friis bistatic radar range equation. Note that we only consider those point clutter that fall within the BS mainlobe and within $\delta r$ proximity of the two-way distance of the radar and MU. In other words, they must lie within the radar range limited resolution cell. 
To do so, we compute the slope of the line joining the scatterer and BS ($m_0$). Then we compute $m_1 = m_0+\tan(\onetheta/2)$ and $m_2=m_0-\tan(\onetheta/2)$ based on the radar BS beamwidth ($\onetheta$). The scatterer is within the radar beamwidth provided the product of the differences $(m_1-m_0)$ and $(m_2-m_0)$ is negative. Then we check if the absolute difference of the two-way path lengths of MU $(R^{tx}_m + R^{rx}_m)$ and point clutter ($R^{tx}_c+R^{rx}_c$) is within the range resolution $\delta r$. If the resulting SCNR is above the predefined threshold $\gamma$, then we assume that the target is detected. The results over a large number of realizations are used to compute the $\Pdc$. Note that the Monte Carlo simulations are useful to test some key assumptions made in SG based analysis such as the path loss approximation of the point clutter within the radar range limited resolution cell to the path loss of the MU. 
\subsection{Explore/Exploit Duty Cycle ($\epsilon$)}
\label{subsec:OptEpsilon}
In the JRC framework, a key parameter is $\epsilon = \frac{T_{search}}{T}$, the duty cycle, of the system. When $\epsilon$ is high, there is longer time for radar localization ($T_{search}$) but less time for communication service ($T_{serve}$) and vice versa. As a result, the radar beams can be narrow while scanning the angular search space. This results in weaker detection performance due to poorer gain. The theorem \eqref{eq:theorem_part1} shows the dependence of throughput $\Upsilon$ on $\epsilon$ which can be written as
\begin{align}
    \Upsilon(\epsilon) = A_0e^{-a/\epsilon}(1-\epsilon),
\end{align}
where 
\begin{align}
\label{eq:aconst}
    a = \frac{-\gamma N_s \kappa_m^4}{\rcsm P_{tx}G_0B_0H_0}+ \frac{-\gamma\rho_c c\tau\kappa_m^2\rcsc}{B_0(\kappa_m+\sqrt{\kappa_m^2-L^2})(\rcsm + \gamma\rcsc)}
\end{align}
and 
\begin{align}
\label{eq:bconst}
    A_0 = \left( 2\pi \kappa_m - \frac{3 \pi L^2}{8\kappa_m}\right)\frac{\rho_m c\tau D}{2\sqrt{1-\frac{L^2}{4\kappa_m^2}}}.
\end{align}
We find the optimized $\tilde{\epsilon}$ for maximum throughput by equating the first derivative of $\Upsilon$ to zero. 
\begin{corollary}
The optimum explore/exploit duty cycle ($\tilde{\epsilon}$) for maximum throughput is given by
 \begin{align}
 \label{eq:corr1}
 \tilde{\epsilon} = \frac{\sqrt{a^2+4a}-a}{2}
 \end{align}
\end{corollary}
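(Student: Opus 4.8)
The plan is to treat \eqref{eq:corr1} as an elementary one–dimensional maximization of $\Upsilon(\epsilon)=A_0 e^{-a/\epsilon}(1-\epsilon)$ over the admissible interval $\epsilon\in(0,1)$, with $A_0$ and $a$ the constants in \eqref{eq:bconst} and \eqref{eq:aconst}. First I would record that $\Upsilon$ is smooth on $(0,1)$, with $\Upsilon(\epsilon)\to 0$ as $\epsilon\to 0^+$ (the exponential factor dominates), $\Upsilon(1)=0$, and $\Upsilon(\epsilon)>0$ in between; hence the maximum is attained at an interior stationary point, and it suffices to solve $\Upsilon'(\tilde{\epsilon})=0$.

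Next I would differentiate via the product and chain rules. Using $\frac{d}{d\epsilon}(-a/\epsilon)=a/\epsilon^2$,
\begin{align}
\Upsilon'(\epsilon)=A_0 e^{-a/\epsilon}\left(\frac{a(1-\epsilon)}{\epsilon^2}-1\right),
\end{align}
and since $A_0 e^{-a/\epsilon}>0$ the stationarity condition reduces to $a(1-\epsilon)=\epsilon^2$, i.e. the quadratic
\begin{align}
\epsilon^2+a\epsilon-a=0,
\end{align}
with roots $\epsilon=\tfrac12\bigl(-a\pm\sqrt{a^2+4a}\bigr)$.

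It then remains to select the admissible root and verify it is the maximizer. The root $\tfrac12\bigl(-a-\sqrt{a^2+4a}\bigr)$ is negative and hence not a valid duty cycle, so the unique admissible stationary point is $\tilde{\epsilon}=\tfrac12\bigl(\sqrt{a^2+4a}-a\bigr)$, which is \eqref{eq:corr1}. One then checks that $\tilde{\epsilon}\in(0,1)$ — positivity from $\sqrt{a^2+4a}>a$ and $\tilde{\epsilon}<1$ from $\sqrt{a^2+4a}<a+2$ — and that $\tilde{\epsilon}$ is indeed a maximum, which follows because the upward parabola $\epsilon^2+a\epsilon-a$ is negative on $(0,\tilde{\epsilon})$ and positive on $(\tilde{\epsilon},1)$, so $\Upsilon'$ switches from positive to negative at $\tilde{\epsilon}$. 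Since every step is routine calculus, there is no genuine analytic difficulty here; the only place that needs care is the sign and discriminant bookkeeping around $a$ — ensuring $a^2+4a\ge 0$ so the root is real, discarding the spurious negative root, and confirming that the stationary point is an interior maximum rather than a boundary value or a minimum.
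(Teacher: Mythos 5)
Your proposal is correct and follows the same route as the paper, which simply sets $\Upsilon'(\epsilon)=0$ for $\Upsilon(\epsilon)=A_0e^{-a/\epsilon}(1-\epsilon)$ and solves the resulting quadratic $\epsilon^2+a\epsilon-a=0$ for its positive root; your added checks (boundary behaviour, discarding the negative root, the sign change of $\Upsilon'$) are exactly the bookkeeping the paper leaves implicit. The one caveat worth flagging is that the whole argument requires $a>0$ (otherwise $\sqrt{a^2+4a}$ need not be real), which is true for the physically meaningful quantity but conflicts with the literal definition in \eqref{eq:aconst}, where the minus signs from the exponent of $\Pdc$ have been absorbed into $a$ itself.
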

The above case shows that the duty cycle is a function of the SCNR of the JRC system (shown in $a$ in \eqref{eq:aconst}). Figure.\ref{fig:ResultsvsKappa_eps} shows the variation of $\Pdc$ and $\Upsilon$ with respect to $\epsilon$ for different values of $\kappa_m$.
 \begin{figure*}
    \centering
    \includegraphics[width=\linewidth]{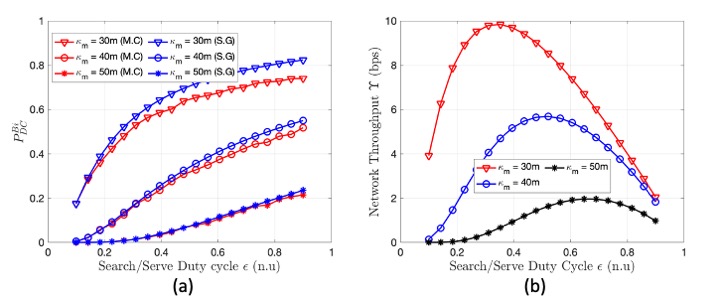}
    \caption{(a) Detection coverage ($\Pdc$) and (b) network throughput ($\Upsilon$) as a function of explore/exploit duty cycle ($\epsilon$) for parametric bistatic bistatic range ($\kappa$).}
    \label{fig:ResultsvsKappa_eps}
\end{figure*}
The view graph, Fig.\ref{fig:ResultsvsKappa_eps}a, shows that $\Pdc$ improves with increase in $\epsilon$. In other words, when we have longer search time, we can use finer beams to search for the MU and thus have a greater likelihood of detecting them. However, the same is not true for the throughput ($\Upsilon$) shown in Fig.\ref{fig:ResultsvsKappa_eps}b. An increase in $\epsilon$ initially improves the $\Upsilon$ but subsequently causes a deterioration due to the reduction in communication service time. The optimum $\tilde{\epsilon}$ in the view graph matches the estimate from the corollary \eqref{eq:corr1}.
Since the above metric is shown to be a function of $\kappa_m$, it becomes difficult for a system operator to vary $\epsilon$ according to the position of the MU. Instead, we recommend that the above tuning is carried out for the maximum bistatic range of the JRC system which is determined based on the pulse repetition frequency. The selection of the PRF is discussed in subsection \ref{subsec:PRI}. Note that in the above view graphs, the results obtained from Monte Carlo system simulations closely match the results derived from the SG based analysis.
\subsection{Monostatic Conditions}
\label{subsec:Monostatic}
The discussions so far are regarding the bistatic scenario. However, several JRC deployments are envisioned to be monostatic configurations. By setting the bistatic length $L=0$ and the bistatic angle $\beta=0$, we obtain $\Pdc$ and $\Upsilon$ for monostatic conditions. 
 \begin{figure*}
    \centering
    \includegraphics[width=\linewidth]{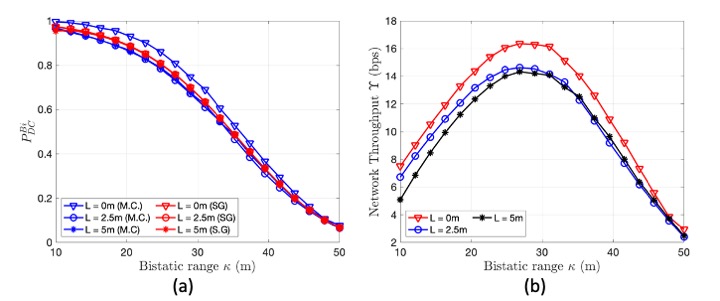}
    \caption{(a) Detection coverage ($\Pdc$) and (b) network throughput ($\Upsilon$) as a function of target bistatic range ($\kappa_m$) for parametric bistatic base length ($L$).}
    \label{fig:ResultsvsKappa_L}
\end{figure*}
Note that here, the bistatic range $\kappa_m$ can be replaced by monostatic range $r_m$ when BS and RX are co-located at the origin of the two-dimensional space. 
\begin{corollary}
The radar detection coverage metric ($\mathcal{P}_{DC}^{Mono}$) and network throughput ($\Upsilon$) for a explore/exploit duty cycle ($\epsilon$) for a monostatic radar based JRC system is given by 
\begin{align}
\label{eq:corr2}
\Upsilon = \mathcal{P}_{DC}^{Mono} \pi r_m\rho_m c\tau(1-\epsilon)D
\end{align}
where
\begin{align}
\label{eq:theorem_part4}
\mathcal{P}_{DC}^{Mono} =exp \left( \frac{-\gamma N_s r_m^4}{\longvar}+ \frac{-\gamma\rho_c c\tau r_m\rcsc}{2B_0\epsilon(\rcsm + \gamma\rcsc)}\right)
\end{align}
\end{corollary}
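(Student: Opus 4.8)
The plan is to obtain the monostatic result as the degenerate $L\to 0$ (equivalently $\beta\to 0$) case of the Theorem, verifying that every geometric quantity that entered the bistatic derivation collapses correctly. Concretely, I would revisit the three places where the baseline $L$ appears: the two-way propagation factor $\hxt$, the range resolution cell $\mathbf{A_c}(\kappa_m)$ together with the range resolution $\delta r$, and the circumference $\mathcal{C}(\kappa_m)$ of the Cassini oval. When the BS and RX coincide at the origin, $R^{tx}=R^{rx}=r_m$, so the bistatic range $\kappa_m=\sqrt{R^{tx}R^{rx}}$ reduces to the monostatic range $r_m$, and \eqref{eq:hxt} becomes $\hxt = H_0/r_m^4$.

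Next I would track the resolution cell. From $\sin\beta_{max}=\sqrt{L^2/\kappa_m^2-L^4/\kappa_m^4}$ we get $\beta_{max}\to 0$ as $L\to 0$, so \eqref{eq:RangeResCell1} gives $\mathbf{A_c}\to c\tau r_m/(2B_0\epsilon)$ --- which is exactly \eqref{eq:RangeResCell2} with $\sqrt{\kappa_m^2-L^2}\to r_m$ --- and the range resolution $\delta r = c\tau/(2\cos(\beta/2))\to c\tau/2$. Likewise $\cos(\beta_{max}/2)\to 1$, so the $\sqrt{1-L^2/(4\kappa_m^2)}$ factor in \eqref{eq:Throughput2} tends to unity. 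Finally, for $L=0$ the defining relation \eqref{eq:polarCoord} reduces to $r_m^2=\kappa_m^2$, i.e. the Cassini oval degenerates to a circle of radius $r_m$ whose circumference is $2\pi r_m$; this is simply the $L\to 0$ limit of the approximation $2\pi\kappa_m-3\pi L^2/(8\kappa_m)$, and here it holds with equality rather than as an approximation.

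Substituting these limits into \eqref{eq:Pdc3}, the first (SNR) exponent becomes $-\gamma N_s r_m^4/(\longvar)$, and in the second (SCR) exponent the denominator factor $\kappa_m+\sqrt{\kappa_m^2-L^2}\to 2r_m$, yielding $-\gamma\rho_c c\tau r_m\rcsc/(2B_0\epsilon(\rcsm+\gamma\rcsc))$; this is precisely $\mathcal{P}_{DC}^{Mono}$ in \eqref{eq:theorem_part4}. Plugging the circle circumference and $\delta r=c\tau/2$ into \eqref{eq:eta2} gives $\eta=\mathcal{P}_{DC}^{Mono}\,\rho_m\,(2\pi r_m)\,(c\tau/2)=\pi r_m\rho_m c\tau\,\mathcal{P}_{DC}^{Mono}$, and multiplying by the service duty cycle $(1-\epsilon)$ and the per-user rate $D$ as in \eqref{eq:throughput} yields \eqref{eq:corr2}.

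I do not expect a serious obstacle here: the monostatic limit is in fact \emph{cleaner} than the bistatic case. The one point worth spelling out is that the approximations invoked in the proof of the Theorem remain legitimate in this limit --- the path-loss identification $\hxc\approx\hxt$ within the resolution cell becomes tighter because all in-cell clutter now lies at essentially the same range $r_m$; the replacement $\beta\approx\beta_{max}$ is vacuous since both equal zero; and the Cassini-circumference approximation becomes an equality. Hence the substitution $L=0$, $\beta=0$ is valid termwise and the corollary follows.
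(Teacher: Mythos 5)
Your proposal is correct and follows the same route as the paper, which obtains the corollary simply by setting $L=0$ and $\beta=0$ in the Theorem; your term-by-term verification (that $\kappa_m\to r_m$, $\kappa_m+\sqrt{\kappa_m^2-L^2}\to 2r_m$, the Cassini oval degenerates to a circle of circumference $2\pi r_m$, and $\delta r\to c\tau/2$) is exactly the intended substitution, carried out more explicitly than the paper does. No gaps.
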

In Fig.\ref{fig:ResultsvsKappa_L}, we study the effect of $L$ parameter on $\Pdc$ and $\Upsilon$. Note that for all values of $L$ and $\kappa_m$ in the above study, the MU remains within the cosite region of the radar. 
The results show that the performance - in terms of both $\Pdc$ and $\Upsilon$ - does not vary significantly for change from monostatic ($L=0$) to bistatic ($L>0$) conditions. 
\subsection{SNR vs. SCR}  
\label{subsec:SNRvsSCR}
Next, we discuss the effects of noise and clutter on the performance of the JRC. As pointed out earlier, there are two terms within the $\Pdc$ in \eqref{eq:theorem_part1} and \eqref{eq:theorem_part2}. 
\begin{figure*}
    \centering
    \includegraphics[width=\linewidth]{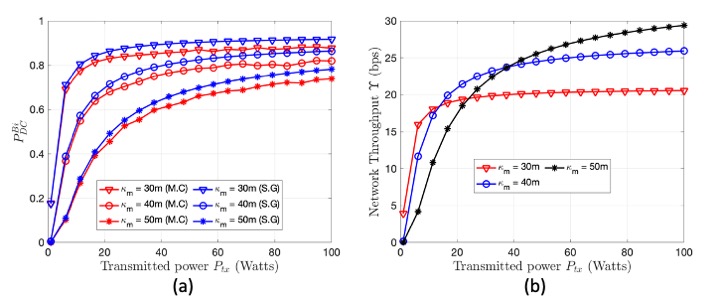}
    \caption{(a) Detection coverage ($\Pdc$) and (b) network throughput ($\Upsilon$) as a function of transmitted power ($P_{tx}$) for parametric bistatic range ($\kappa_m$).}
    \label{fig:ResultsvsPtx_Kappa}
\end{figure*}
\begin{figure*}
    \centering
    \includegraphics[width=\linewidth]{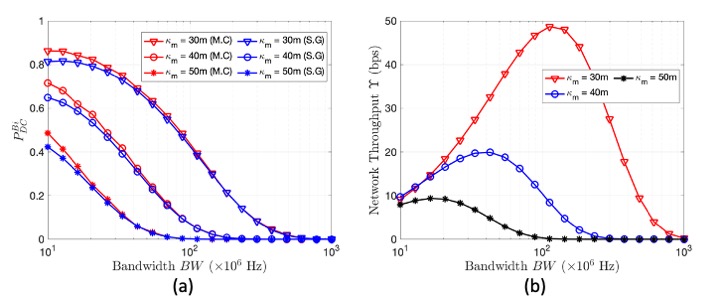}
    \caption{(a) Detection coverage ($\Pdc$) and (b) network throughput ($\Upsilon$) as a function of bandwidth ($BW$) for parametric bistatic range ($\kappa_m$).}
    \label{fig:ResultsvsBW_Kappa}
\end{figure*}
\begin{figure*}
    \centering
    \includegraphics[width=\linewidth]{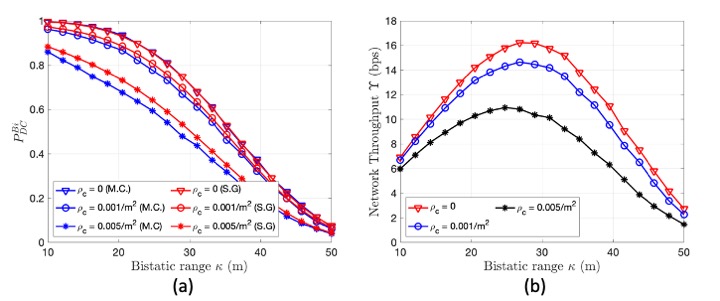}
    \caption{(a) Detection coverage ($\Pdc$) and (b) network throughput ($\Upsilon$) as a function of bistatic range ($\kappa_m$) for parametric clutter density ($\rho_c$).}
    \label{fig:ResultsvsKappa_rho}
\end{figure*}
\begin{figure*}
    \centering
    \includegraphics[width=\linewidth]{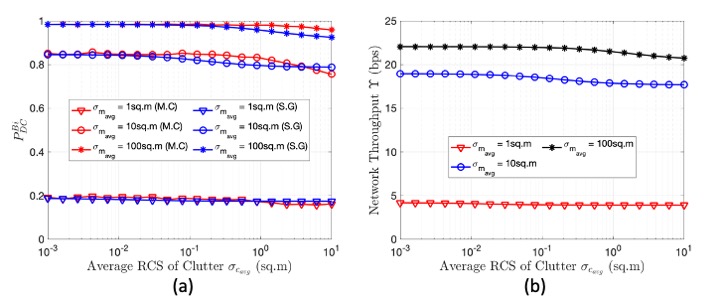}
    \caption{(a) Detection coverage ($\Pdc$) and (b) network throughput ($\Upsilon$) as a function of mean clutter RCS ($\rcsc$) for parametric mean target RCS ($\rcsm$).}
    \label{fig:ResultsvsSigmac_Sigmam}
\end{figure*}
The first term captures the effect of the signal-to-noise ratio (SNR) on the JRC performance while the second term captures the effect of the signal-to-clutter ratio (SCR). 
Figure.\ref{fig:ResultsvsPtx_Kappa} shows the effect of increasing the transmitted power $P_{tx}$ on $\Pdc$ and $\Upsilon$. The results show that $\Pdc$ and $\Upsilon$ increase initially with increase in power but subsequently, the performance saturates because the clutter returns also increase proportionately with increase in $P_{tx}$. 
On the other hand, when we consider the radar bandwidth which is the reciprocal of the pulse width ($BW = \frac{1}{\tau}$), we observe that there is an optimum $BW$ for maximum $\Upsilon$ in Fig.\ref{fig:ResultsvsBW_Kappa}b. This is because when $BW$ is increased, the range resolution decreases and correspondingly the clutter resolution cell size. As a result, fewer clutter scatterers contribute to the SCNR. But, on the other hand, the radar noise ($N_s = K_BT_sBW$) also increases which results in poorer quality radar links. 
\begin{corollary}
The optimum bandwidth $\tilde{BW}$ for maximum throughput $\Upsilon$ is obtained by the derivation of \eqref{eq:theorem_part2} with respect to $BW$ and is given by
\begin{align}
\label{eq:corr3}
\tilde{BW} = \left(\frac{\rho_c c \rcsc \rcsm P_{tx}G_0H_0}{\kappa_m^2 K_BT_s(\kappa_m+\sqrt{\kappa_m^2-L^2})(\rcsm+\gamma \rcsc)}\right)^{1/2}
\end{align}
\end{corollary}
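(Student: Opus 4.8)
The plan is to reduce \eqref{eq:theorem_part2} to a single--variable optimisation in $BW$ and close it with an elementary convexity argument. Only two quantities in $\Pdc$ depend on $BW$: the receiver noise enters the SNR term through $\Noise = K_B T_s\,BW$, and the pulse width enters the SCR term through $c\tau$ with $\tau = 1/BW$. Substituting both, the exponent of $\Pdc$ takes the form $-a_1 BW - a_2/BW$ with
\begin{align}
a_1 = \frac{\gamma K_B T_s \kappa_m^4}{\longvar}, \qquad
a_2 = \frac{\gamma \rho_c c\, \kappa_m^2 \rcsc}{B_0\epsilon\big(\kappa_m+\sqrt{\kappa_m^2-L^2}\big)\big(\rcsm+\gamma\rcsc\big)},
\end{align}
both strictly positive; the factor $B_0\epsilon$ coming from the beamwidth relation is left symbolic throughout.

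Since $t\mapsto e^{t}$ is increasing, maximising $\Pdc$ over $BW>0$ is equivalent to minimising $f(BW) = a_1 BW + a_2/BW$. This $f$ is strictly convex on $(0,\infty)$: $f'(BW) = a_1 - a_2/BW^2$ vanishes at the unique point $BW = \sqrt{a_2/a_1}$ and $f''(BW) = 2a_2/BW^3>0$, so this stationary point is the global minimiser (equivalently, AM--GM gives $f(BW)\ge 2\sqrt{a_1a_2}$ with equality iff $a_1BW = a_2/BW$). It then remains to simplify $a_2/a_1$: the common factors $\gamma$, $B_0$ and $\epsilon$ cancel and $\kappa_m^4$ in the denominator of $a_1$ absorbs the $\kappa_m^2$ in the numerator of $a_2$, so that
\begin{align}
\tilde{BW} = \sqrt{\frac{a_2}{a_1}} = \left(\frac{\rho_c c\, \rcsc \rcsm P_{tx}G_0 H_0}{\kappa_m^2 K_B T_s\big(\kappa_m+\sqrt{\kappa_m^2-L^2}\big)\big(\rcsm+\gamma\rcsc\big)}\right)^{1/2},
\end{align}
which is \eqref{eq:corr3}.

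There is no real analytic obstacle here; the one point that warrants a sentence of justification is that the statement speaks of ``maximum throughput $\Upsilon$'' whereas, as the phrasing ``derivation of \eqref{eq:theorem_part2}'' indicates, it is the detection-coverage factor that is actually differentiated. The full throughput \eqref{eq:theorem_part1} carries an extra $c\tau = c/BW$ prefactor, so $\ln\Upsilon$ acquires an additional $-\ln BW$ term and the exact stationarity condition becomes $a_1BW^2 + BW - a_2 = 0$, i.e. $\tilde{BW}_{\Upsilon} = (\sqrt{1+4a_1a_2}-1)/(2a_1)$. I would therefore state explicitly that the corollary optimises the dominant (exponential) $BW$ dependence; since $d(\ln\Upsilon)/dBW$ evaluated at $\sqrt{a_2/a_1}$ equals $-\sqrt{a_1/a_2}<0$, the genuine $\Upsilon$-optimum lies slightly below $\tilde{BW}$, and the two coincide to leading order when $4a_1a_2\gg 1$.
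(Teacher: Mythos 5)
Your proof is correct and follows exactly the route the paper intends: substitute $N_s = K_B T_s\,BW$ and $\tau = 1/BW$ into the exponent of \eqref{eq:theorem_part2}, minimize the resulting $a_1 BW + a_2/BW$ by convexity/AM--GM, and simplify $\sqrt{a_2/a_1}$ to recover \eqref{eq:corr3}. Your additional observation is also well taken and worth retaining: the corollary differentiates only $\Pdc$ while $\Upsilon$ in \eqref{eq:theorem_part1} carries an extra $c\tau = c/BW$ prefactor, so the true stationary point of $\Upsilon$ satisfies $a_1 BW^2 + BW - a_2 = 0$ and coincides with $\sqrt{a_2/a_1}$ only in the regime $4a_1a_2 \gg 1$ --- a caveat the paper's statement glosses over.
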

The Monte Carlo results in Fig.\ref{fig:ResultsvsBW_Kappa}a show good agreement with SG results especially for higher values of BW. At low BW, the errors due to the path loss approximation between the point clutter and the MU become more evident. However, in real world scenarios, microwave/millimeter JRC systems are developed specifically for high bandwidth waveforms for obtaining fine range resolution of the MU.

Next we study the impact of clutter density and clutter RCS in Fig.\ref{fig:ResultsvsKappa_rho} and Fig.\ref{fig:ResultsvsSigmac_Sigmam}. When the clutter density is low ($\rho_c$ approaches zero), we observe that $\Pdc$ decays at the fourth power of $\kappa_m$ as shown in Fig.\ref{fig:ResultsvsKappa_rho} and the throughput is entirely a function of the SNR. For large values of $\kappa_m$, the system is dominated by the effects of clutter rather than noise. We observe that the throughput increases initially with increase in $\kappa_m$ due to the increase in number of MU within the area of interest and then subsequently the throughput falls due to the deterioration in the radar link metrics. 
\begin{table}[h]
    \caption{Radar, target and clutter parameters used in the stochastic geometry formulations and Monte Carlo simulations}
    \label{tab:my_params}
    \centering
    \begin{tabular}{c|c|c}
    \hline \hline
    Parameter & Symbol & Values\\
    \hline \hline
    Baselength & $L$ & 5m \\
    Transmitted power  & $P_{tx}$ & 1mW  \\
    Total time  & $T_{search}+T_{serve}$& 1s\\
    Dwell time & $T_{beam}$ & 5ms\\
    Pulse width & $\tau = \frac{1}{BW}$ & 1ns\\
    Noise temperature (Kelvin) & $T_s$ & 300K\\
    Gain constant & $G_0$& 1\\
    Threshold & $\gamma$ & 1\\
    \hline
    Mean clutter RCS & $\rcsc$ & 1$m^2$ \\
    Clutter density & $\rho_c$ & 0.01$/m^2$\\
    \hline
    Mean MU RCS & $\rcsm$ & 1$m^2$ \\
    \hline \hline
    \end{tabular}
\end{table}
The effect of $\rcsc$ is less significant on $\Pdc$ and $\Upsilon$ as both curves are flat in Fig.\ref{fig:ResultsvsSigmac_Sigmam}a and b. On the other hand, the performances are far more sensitive to $\rcsm$.
\subsection{Pulse repetition interval}
\label{subsec:PRI}
The maximum two-way unambiguous range of a radar, $R_{max} = (R^{tx}_m+R^{rx}_m)_{max}$, is equal to $cT_{PRI}$. Through the intersection of the ellipse defined for a uniform $R_{max}$ and the Cassini oval of constant $\kappa_{m}$, the two terms are related through 
\begin{align}
\label{eq:Runambig}
    R_{max}= cT_{PRI} = L^2+2\kappa_{m}^2(1+\cos\beta).
\end{align}
Note that in the above expression, the bistatic range changes for the parameter $\beta$.
The maximum value that $\cos\beta$ can take is 1. Hence, for a given radar's $T_{PRI}$
\begin{align}
    \kappa^{max}= \frac{1}{2}(c^2T^2_{PRI}-L^2)^{1/2}.
\end{align}
If we assume that at this range $\kappa_{max} >> L$, then $\Pdc(\kappa_{max})$ is given by
\begin{align}
    \Pdc(\kappa_{max}) = exp \left( \frac{-\gamma N_s (c^2T_{PRI}^2-L^2)^2}{16\longvar}+ \frac{-\gamma\rho_c c\tau\rcsc (c^2T_{PRI}^2-L^2)^{1/2}}{4B_0\epsilon(\rcsm + \gamma\rcsc)}\right), 
\end{align}
and the throughput is given by
\begin{align}
    \Upsilon(\kappa_{max}) = \Pdc(\kappa_{max})\frac{\pi}{2}(c^2T_{PRI}^2-L^2)^{1/2} \rho_m c\tau (1-\epsilon)D.
\end{align}
In the above throughput expression, it is evident that if the $T_{PRI}$ is larger, the radar detection performance deteriorates. However, a larger number of MU are included in the region-of-interest due to which there are some gains in the throughput. We assume that if the $R_{max}$ is high enough to ignore the effects of $L$, the radar operates under clutter limited conditions, and the throughput is a function of $T_{PRI}$, as given in 
\begin{align}
\label{eq:Tpri}
    \Upsilon(T_{PRI}) = exp\left(-\frac{\gamma \rho_c \rcsc c^2\tau T_{PRI}}{4B_0\epsilon(\rcsm+\gamma\rcsc)}\right)\frac{\pi}{2}c^2\tau T_{PRI}\rho_m (1-\epsilon)D.
\end{align}
\begin{corollary}
Accordingly, the optimum pulse repetition interval, $\tilde{T}_{PRI}$, can be estimated for maximum throughput as
\begin{align}
\label{eq:corr4}
    \tilde{T}_{PRI} = \frac{4B_0\epsilon(\rcsm+\gamma\rcsc)}{\gamma\rho_c \rcsc c^2\tau}.
\end{align}
\end{corollary}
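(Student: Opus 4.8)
The plan is to treat \eqref{eq:Tpri} as a one-dimensional optimization in $T_{PRI}$ with every other system parameter held fixed. The key observation is that the throughput expression in \eqref{eq:Tpri} has the structure $\Upsilon(T_{PRI}) = K\,T_{PRI}\,e^{-b\,T_{PRI}}$, where $b = \frac{\gamma\rho_c\rcsc c^2\tau}{4B_0\epsilon(\rcsm+\gamma\rcsc)} > 0$ is the clutter-limited decay rate and $K = \frac{\pi}{2}c^2\tau\rho_m(1-\epsilon)D > 0$ is a prefactor, both constants independent of $T_{PRI}$. Thus the corollary reduces to the elementary fact that $T\mapsto K T e^{-bT}$ is maximized at $T=1/b$.

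First I would differentiate, obtaining $\frac{d\Upsilon}{dT_{PRI}} = K e^{-b T_{PRI}}\big(1 - b\,T_{PRI}\big)$. Since $K e^{-b T_{PRI}}>0$ for every finite $T_{PRI}$, the unique stationary point is $T_{PRI} = 1/b$. Back-substituting the expression for $b$ gives $\tilde{T}_{PRI} = \frac{4B_0\epsilon(\rcsm+\gamma\rcsc)}{\gamma\rho_c\rcsc c^2\tau}$, which is exactly \eqref{eq:corr4}.

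To confirm this is the global maximizer on $(0,\infty)$ rather than a minimizer or an inflection, I would note that $1 - b T_{PRI}$ changes sign from positive to negative as $T_{PRI}$ crosses $1/b$, so $\Upsilon$ is increasing on $(0,1/b)$ and decreasing on $(1/b,\infty)$; equivalently $\frac{d^2\Upsilon}{dT_{PRI}^2}\big|_{T_{PRI}=1/b} = -K b\,e^{-1} < 0$. Combined with $\Upsilon(0)=0$ and $\Upsilon(T_{PRI})\to 0$ as $T_{PRI}\to\infty$, this establishes that the interior critical point is the unique global maximum.

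There is no real obstacle in the calculus itself. The only subtlety worth flagging is that \eqref{eq:Tpri} is already obtained under the regime $\kappa_{max}\gg L$, in which the SNR term of $\Pdc$ is negligible and the radar is clutter-limited (on top of the Cassini-oval and range-resolution-cell approximations inherited from the main theorem). Hence $\tilde{T}_{PRI}$ is exact only within that regime: if the SNR term were retained, the stationarity condition would become transcendental in $T_{PRI}$ and would not close in the simple form above. So the main effort, such as it is, lies in justifying the clutter-limited reduction leading to \eqref{eq:Tpri}, not in the optimization step.
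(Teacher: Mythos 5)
Your proposal is correct and follows the same route the paper implicitly takes: differentiate the clutter-limited throughput \eqref{eq:Tpri}, which has the form $K\,T_{PRI}e^{-bT_{PRI}}$, set the derivative to zero, and read off $\tilde{T}_{PRI}=1/b$, which is exactly \eqref{eq:corr4}. Your added second-derivative and boundary checks, and the remark that the result holds only in the $\kappa_{max}\gg L$ clutter-limited regime, are consistent with (indeed slightly more careful than) the paper's treatment.
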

The above expression shows that higher $\epsilon$ (resulting in narrow beams) and shorter pulse duration (smaller $\tau$) will allow for a longer pulse repetition interval and unambiguous range due to improvement in the link metrics.
\subsection{Meta Distribution of SCNR in a Bistatic Radar}
Although the $\Pdc$ is a useful metric for tuning radar parameters, it only provides an average view of the network across all possible network realizations of the underlying point process. This inhibits derivation of link-level reliability of the radar detection performance. 
In this regard, the meta-distribution, i.e., the distribution of the radar $\Pdc$ conditioned on a realization of $\Phi$ provides a framework to study the same. 
\begin{align}
\PdcPhi &= \mathcal{P}(\mathbf{SCNR}(\kappa_m)\geq \gamma \vert \Phi)=\mathcal{P}\left(\mathbf{\sigma_m}\geq \Su\frac{\gamma\mathbf{\sigma_c}\kappa_m^4}{\kappa_c^4}+\frac{\gamma N_s \kappa_m^4}{P_{tx}G_0B_0\epsilon H_0} \bigg| \Phi\right), \\
& = \exp\left(- \frac{\gamma N_s \kappa_m^4}{\rcsm P_{tx} G_0 B_0 \epsilon H_0}\right) \left(\Pro \left(\frac{\gamma \rcsc (R^{tx}_c)^{-2} (R^{rx}_c)^{-2} \kappa_m^4}{\rcsm + \gamma \rcsc (R^{tx}_c)^{-2} (R^{rx}_c)^{-2} \kappa_m^4 }\right)\right).
\end{align}
For a point clutter located at a distance, $R^{tx}_c$, from the transmitter at an angle $\theta^{tx}_c$, we have $ (R^{rx}_c)^2= (R^{tx}_c)^2 + L^2 + 2 R^{tx}_c L \cos (\theta^{tx}_c)$. The direct evaluation of the exact distribution of $\PdcPhi$ is challenging. Thus, we take an indirect approach to evaluate it through the calculation of its moments. 
In particular, the $b$-th moment of $\PdcPhi$ is given by:
\begin{align}
    &M_b = \mathbb{E}\left[ \mathcal{T}(b,\kappa_m) \left(\Pro \left(\frac{\gamma \rcsc (R^{tx}_c)^{-2} (R^{rx}_c)^{-2} \kappa_m^4}{\rcsm + \gamma \rcsc (R^{tx}_c)^{-2} (R^{rx}_c)^{-2} \kappa_m^4 }\right)\right)^b \right] \nonumber \\
    & =  \mathcal{T}(b,m) \mathbb{E}\left[\left(\Pro \left(\frac{\gamma \rcsc (R^{tx}_c)^{-2} (R^{rx}_c)^{-2} \kappa_m^4}{\rcsm + \gamma \rcsc (R^{tx}_c)^{-2} (R^{rx}_c)^{-2} \kappa_m^4 }\right)\right)^b\right] \nonumber \\
    & = \frac{1}{2\pi}\mathcal{T}(b,m) \int\limits_{0}^{2\pi}\exp\left(-\rho_c \int\limits_{\theta^{tx}_m - \frac{\Delta \theta_{tx}}{2}}^{\theta^{tx}_m + \frac{\Delta \theta_{tx}}{2}}\int\limits_{R^{tx} - \frac{\delta r}{2}}^{R^{tx} + \frac{\delta r}{2}} 1 - \left(\frac{\gamma \rcsc y^{-2} y_r^{-2} \kappa_m^4}{\rcsm + \gamma \rcsc y^{-2} y_{r}^{-2} \kappa_m^4 }\right)^b y dy d\theta^{tx}_{c}\right) d\theta_m\nonumber \\
    & = \frac{1}{2\pi}\mathcal{T}(b,m) \int\limits_{0}^{2\pi} \exp\left(-\rho_c \sum_{k=1}^b \binom{b}{k}  \int\limits_{\theta^{tx}_m - \frac{\Delta \theta_{tx}}{2}}^{\theta^{tx}_m + \frac{\Delta \theta_{tx}}{2}}\int\limits_{R^{tx} - \frac{\delta r}{2}}^{R^{tx} + \frac{\delta r}{2}}\left(-\frac{\gamma \rcsc y^{-2} y_{r}^{-2} \kappa_m^4}{\rcsm + \gamma \rcsc y^{-2} y_r^{-2} \kappa_m^4 }\right)^k y dy d\theta^{tx}_{c}  \right)d\theta_m,
\end{align}
where, $\mathcal{T}(b,m) = \exp\left(- \frac{\gamma b N_s \kappa_m^4}{\rcsm P_{tx} G_0 B_0 \epsilon H_0}\right)$, $y_r = (y^2 + L^2 - 2 y L \cos(\theta^{tx}_{c}))^{\frac{1}{2}}$. 
Now, for a large bandwidth, the range-resolution cell is relatively small, and hence, with the path loss approximation $\sqrt{R^{tx}_cR^{rx}_c} = \kappa_m$ for all clutter points within the cell, we have:
\begin{align}
    M_b & = \exp\left(- \frac{\gamma b N_s \kappa_m^4}{\rcsm P_{tx} G_0 B_0 \epsilon H_0}\right) \mathbb{E}_n \left[\left(\frac{\rcsm}{ \rcsm + \gamma\rcsc }\right)^{nb}\right] \nonumber \\
    & = \exp\left(- \frac{\gamma b N_s \kappa_m^4}{\rcsm P_{tx} G_0 B_0 \epsilon H_0}\right) \exp\left(\rho_c A_c(\kappa_m) \left(\left(\frac{\rcsm}{ \rcsm + \gamma\rcsc }\right)^{b} - 1\right) \right)
\end{align}
We note here that with the path loss approximation, only the number of clutter points (and not their locations) inside the range resolution cell $n$ impacts the moment.
Then, the complementary CDF of the conditional $\PdcPhi$ can be evaluated using the Gil-Pelaez inversion theorem as:
\begin{align}
    F_{\PdcPhi}(z) = \frac{1}{2} - \frac{1}{\pi} \int_0^{\infty} \frac{\mathcal{I}\left(\exp\left(-ju \log(z)\right)\right) M_{ju}}{u} du 
\end{align}
where, $j = \sqrt{-1}$ and  $M_{ju}(\cdot)$ is the $ju$-th moment of $\PdcPhi$.

In Fig.~\ref{fig:bistatic_approx} we see the impact of the path loss approximation of the clutter points on the meta-distribution of the SCNR. 
\begin{figure*}
    \centering
    \includegraphics[width=0.5\linewidth]{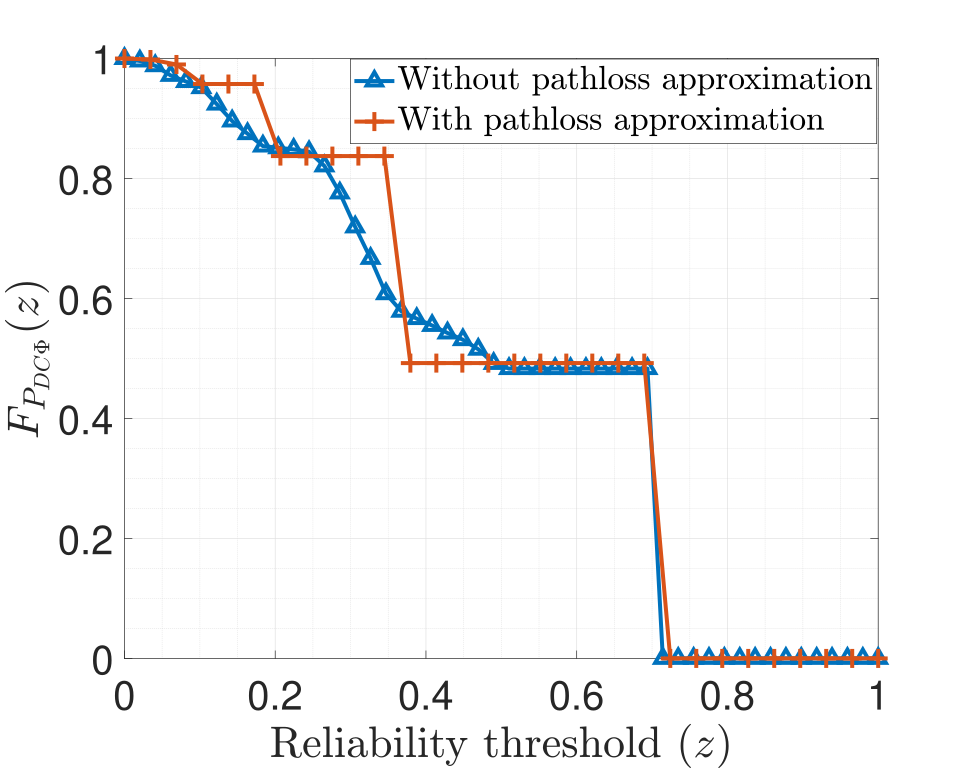}
    \caption{Comparison of the meta-distribution of the SCNR with and without the path loss approximation of the clutter points. Here $\epsilon = 0.5$.}
    \label{fig:bistatic_approx}
\end{figure*}
In particular, we see that since with the path loss approximation, the meta-distribution depends only on the number of clutter points within the range resolution cell, the corresponding plot has a stepped behaviour, where each step corresponds to a certain number of clutter points. On the contrary, the plot without the path loss approximation takes into account the relative randomness in the locations of the clutter points within the range resolution cell. For a given $\kappa_m$, the path loss approximation may result in either an overestimation or an underestimation of the actual meta-distribution. However, such an analysis is out of scope of the current work and will be investigated in a future work.
\begin{figure*}
    \centering
    \includegraphics[width=0.5\linewidth]{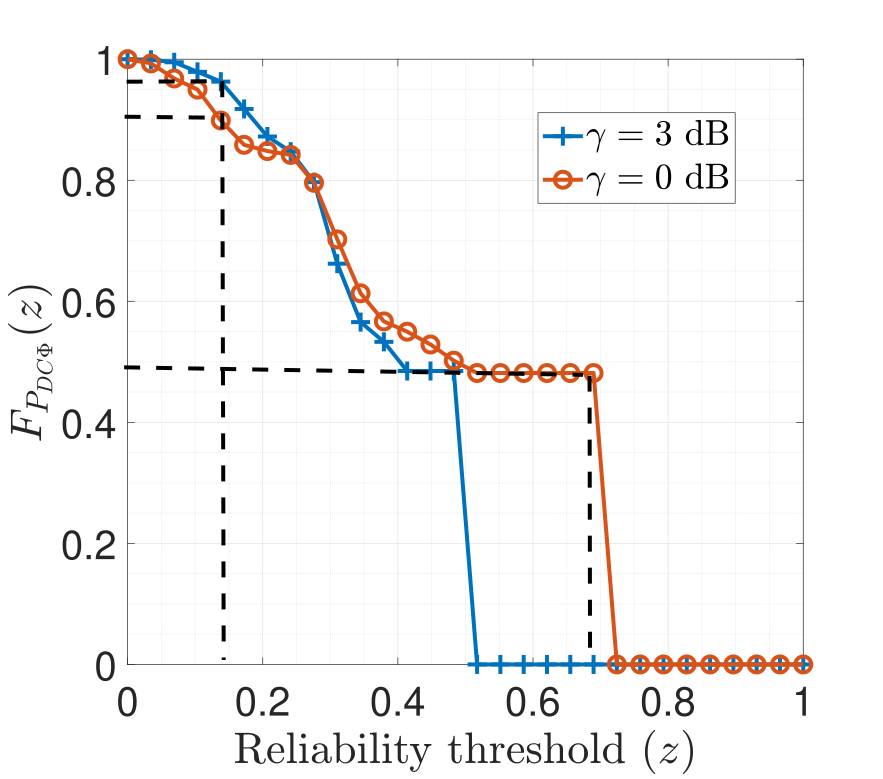}
    \caption{Meta distribution of the SCNR for different SCNR thresholds.}
    \label{fig:meta_bistatic}
\end{figure*}
In Fig.~\ref{fig:meta_bistatic} we plot the meta-distribution of the SCNR for different SCNR thresholds. This represents, qualitatively, a fine-grained analysis of the radar detection. For a given $\gamma$ the meta-distribution evaluated at a reliability threshold $z$ represents the $\mathbb{P}\left(\mathbf{SCNR}(\kappa_m)\geq \gamma \vert \Phi \geq z\right)$. Thus, the $y-$axis of the plot represents the minimum fraction of users which are detected with a reliability threshold given in the $x-$axis. For example, when the radar detection threshold is set at $\gamma = 0$ dB, we observe that about half ($F_{P_{DC\Phi}}(z) = 0.5$) of the targets are detected with a reliability of at least 70\% (i.e., $z = 0.7$), while virtually no targets ($F_{P_{DC\Phi}}(z) = 0$) are detected with a reliability of 70\% when the detection threshold is set at $\gamma = 3$ dB. On the lower reliability regime, interestingly, we observe that with $\gamma = 3$ dB, more than 95\% of the targets ($F_{P_{DC\Phi}}(z) = 0.95$) are detected with a reliability of at least 15\% (i.e., with $z = 0.15$) while the same for $\gamma = 0$ dB is lower (about 90\%). This also indicates that for a lower SCNR threshold, not only the detection probability $\Pdc$ is higher, but also guaranteeing higher reliability for individual links is more likely. Remarkably, we observe that regardless of the value of $\Pdc$, none of the targets can be guaranteed to be detected beyond 70\% ($z = 0.7$) reliability, and to achieve that, additional radar transceivers must be deployed.
\section{Conclusions}
We have provided an SG based analytical framework to provide system level planning insights into how radar based localization can enhance communication throughput of a JRC system. The key advantage of this framework is that it accounts for the significant variations in the radar, target and clutter conditions that may be encountered in actual deployments without requiring laborious system level simulations or measurement data collection. Specifically, we provide a theorem to optimize JRC system parameters such as the explore/exploit duty cycle, the transmitted power, bandwidth and pulse repetition interval for maximizing the network throughput. The results are presented for generalized bistatic radar scenarios from which the monostatic results are derived through limiting conditions. We also provide a study on the meta-distribution of the radar detection metric which provides the key insight that none of the mobile users can be reliably detected beyond 70\% of the time with a single JRC configuration. Our results are validated with Monte Carlo simulations. 

The analysis is based on some assumptions: First, we have assumed that the mobile users/radar targets fall in the cosite region of the bistatic radar when the bistatic range is greater than twice the baselength. This assumption is satisfied in several JRC applications such as indoor localization using WiFi/WLAN devices and in radar enhanced vehicular communications. However, the assumption does not hold for GNSS based bistatic radar remote sensing where the transmitter is the satellite while the receiver is mounted close to the earth. Second, we have considered short range line-of-sight links in our study which is applicable to mmWave JRC implementations. However, real world deployments encounter blockages that must be accounted for from a JRC system design perspective. Similarly, the radar will receive returns from sidelobes along with the main lobes which has not been considered in our work. Finally, in our throughput analysis, we have assumed that all the mobile users have uniform data rates that can be supported. In real world conditions, the requirements from individual users will differ and there may be system constraints on the maximum resource utilization. Hence, further analysis on network throughput based on system resource constraints would be beneficial for fine tuning JRC system parameters and would form the basis of future studies. 
\bibliographystyle{IEEEtran}
\bibliography{main}
\end{document}